\newsavebox{\fmbox}
\newtheorem{theorem}{Theorem}[section]
\newtheorem{lemma}[theorem]{Lemma}
\newtheorem{definition}[theorem]{Definition}
\newtheorem{property}[theorem]{Property}
\newtheorem{observation}[theorem]{Observation}
\newtheorem{claim}[theorem]{Claim}
\def\fnum@figure{{\bf Figure \thefigure}}
\def\fnum@table{{\bf Table \thetable}}
\long\def\@mycaption#1[#2]#3{\addcontentsline{\csname
 ext@#1\endcsname}{#1}{\protect\numberline{\csname
  the#1\endcsname}{\ignorespaces #2}}\par
     \begingroup
       \@parboxrestore
          \small
       \@makecaption{\csname fnum@#1\endcsname}{\ignorespaces
#3\endgroup}
      }
\title{A Deterministic Algorithm for the Vertex Connectivity Survivable Network Design Problem}
\author{Pushkar Tripathi\footnote{Research Supported by NSF grants CCF-0728640 and CCF-0914732} \\ Georgia Institute of Technology}
\begin{document}
\maketitle

\begin{abstract}
In the vertex connectivity survivable network design problem we are given
an undirected graph $G = (V,E)$ and connectivity requirement $r(u,v)$ for each pair of vertices $u,v \in V$.We are also
given a cost function on the set of edges. Our goal is to find the minimum cost subset of edges such that 
for every pair $(u,v)$ of vertices we have $r(u,v)$ vertex disjoint paths in the graph induced by the chosen edges.
Recently, Chuzhoy and Khanna \cite{khanna} presented a randomized algorithm that achieves a factor of $O(k^3 \log n)$ for this 
problem where $k$ is the maximum connectivity requirement. In this paper we derandomize their algorithm to get 
a \textit{deterministic} $O(k^3 \log n)$ factor algorithm.
Another problem of interest is the single source version of the problem,
where there is a special vertex $s$ and all non-zero connectivity requirements must involve $s$. 
We also give a \emph{deterministic} $O(k^2 \log n)$ algorithm for this problem.
\end{abstract}

\section{Introduction}
In the vertex connectivity survivable network design problem(VC-SNDP) problem 
we are given an undirected graph $G(V,E)$ and a cost function
over the edges, and a connectivity requirement $r(u, v) \leq k$ for all $u, v \in V$ . We are also
given a set $T \subseteq V$ of terminals, and $r(u, v) > 0$ only if both $u$ and $v$ belong to 
$T$. Such pairs of terminals with non-zero connectivity requirements 
are called source-sink pairs. The goal is to find the minimum cost subset of edges 
such that each source sink pair, $(u,v)$ is connected by $r(u,v)$ vertex disjoint paths
in the graph induced by the edges. In the Edge Connectivity SNDP(EC-SNDP) we 
wish to find the minimum cost subset of edges such that the induced paths are edge-disjoint.

The best current approximation algorithm for EC-SNDP is by Jain\cite{kamal1}, using the
technique of iterated rounding.As for the VC-SNDP, no non-trivial algorithm was known 
for this problem until the recent result by Chuzhoy and Khanna\cite{khanna} in which 
they gave an $O(k^3 \log n)$ factor randomized algorithm for this problem. Prior to 
the work of Chuzhoy and Khanna, there had been some progress for some restricted special 
cases of this problem. For the case when $k=1\ or \ 2$, $2$-approximate algorithms were known,
due to Agarwal et. al.\cite{agarwal} and Fleisher \cite{fleischer012}. The $k$-vertex connected spanning
subgraph problem, which is a special case of VC-SNDP where the connectivity requirement between every pair
of vertices is equal to $k$, has been studied extensively. Cheriyan et al. \cite{cherian1,cherian2} gave 
an $O(\log k)$-approximation algorithm for this case when $k$ is at most $\sqrt{n/6}$, and an $O(\sqrt{n/\epsilon})$-
approximation algorithm for $k \leq (1-\epsilon)n$. The bound for large values of $k$
was improved further by a series of improvements \cite{kortsarz,fakch} culminating 
in the current best factor of $O(\log k. \log \frac{n}{n-k})$ due to Nutov\cite{nutovManuscript}.

On the hardness front, Kortsarz et. al \cite{kortsarz2} showed that VC-SNDP is hard to approximate
beyond a factor of $2^{\log ^{1-\epsilon} n}$for any $\epsilon >0$ when $k$ is polynomially large in $n$.
Chakraborty et. al\cite{chakrabarty} extended this result a $k^{\epsilon}$ hardness for $k > k_0$ for some 
fixed positive constants $k_0$ and $\epsilon$. The problem is known to be APX hard even for small values of $k\geq 3$.

In the single source version of the VC-SNDP, there is a special vertex $s$, and all
non-zero connectivity requirements must involve $s$. This problem has received a lot of attention
lately. Kartsarz et.al \cite{kortsarz2} showed that this version of the problem is hard to approximate beyond a factor 
of $\Omega(\log n)$. Lando and Nutov \cite{lando} further improved this to $\log ^{2-\epsilon}n$ for large(polynomial in $n$)
values of $k$. Chakraborty et. al \cite{chakrabarty} gave an $O(2^{O(k^2)} \log^4 n)$ for this problem. Through a series of 
improvements by \cite{chakrabarty,korula1,khanna2,khanna} this factor was brought down to $O(k^2 \log n)$. In \cite{khanna2} 
chuzhoy et. al presented an $O(k\log n)$ algorithm for the special case when 
all non-zero connectivity requirements are equal to $k$.

Element-connectivity SNDP is a closely related problem to EC-SNDP and VC-SNDP which
was first defined by \cite{vijay}. The input to the element-connectivity SNDP is the same as for EC-SNDP and
VC-SNDP, and the set of terminals $T$ is defined as above. We define an element as 
any edge or any non-terminal vertex in the graph. We say that
a pair $(s, t)$ of vertices is $k$-element connected iff $s$ and $t$
 cannot be disconnected by deleting a set of at most $(k - 1)$
elements. As with the EC-SNDP and VC-SNDP we are required to find the 
minimum cost subset of edges such that every pair of vertices,$(u,v)$ in
 the induced is $r(u,v)$-element connected. 

The element connectivity problem is considered to be of intermediate difficulty between the
VC-SNDP and EC-SNDP due to the following observation. If $(s, t)$ are $k$-vertex connected, 
then they are also $k$-element connected, furthermore, if $(s, t)$ are $k$-element 
connected, then they are also $k$-edge connected. In \cite{fjw} Fleischer et. al present a $2$-approximation
algorithm for this problem.

\subsection{Our Results}
In this paper we present a deterministic $O(k^3 \log n)$ factor algorithm for
the VC-SNDP. Our algorithm is based on the technique introduced by Chuzhoy and Khanna\cite{khanna}. 
In their algorithm Chuzhoy and Khanna prove a key technical lemma that uses randomization to show 
the existance of a bipartite graph with certain desireable properties(refer to property \ref{goodness_property}). 
Almost every sufficiently dense random graph satisfies these properties, but it is 
NP-hard to even verify these properties for a given graph. This precludes standard
derandomization techniques such as those introduced by Nissan and Wigderson\cite{nissan} and recently 
by Sivakumar\cite{siva} which rely on space efficient property testing.

Another common approach to derandomization is by the use of conditional expectation techniques like in \cite{cond_derand}.
These techniques rely on progressively replacing random choices by deterministic ones while ensuring the 
probability of failure is less than $1$, if the remaining choices are made 
based on random coin flips. These techniques are difficult to apply for this problem because the failure probability
seems to be very hard to calculate.

We overcome these problems by giving an alternate, \textit{stronger} set of contraints(refer property \ref{strong_goodness_restate})
that characterize the graphs with the desired properties. Our characterization of 
such graphs is simpler and easier to verify and is therefore amenable to derandomization. 
We exhibit a deterministic, polynomial time, local search based algorithm to find a graph satisfying these constraints. 
Our local search algorithm is combinatorial but uses a linear programming formulation of the 
desired properties and probabilistic arguements to reason about its convergence.
We also present a deterministic $O(k^2 \log n)$ factor algorithm for the single source variant of this problem. 


\section{Vertex Connectivity Algorithm}
\label{their_algo}
In this section we present an overview of the $O(k^3 \log n)$ algorithm 
for the VC-SNDP by $\cite{khanna}$ that uses the constant factor approximation algorithm
for the element connectivity SNDP as a subroutine.

We begin by making $m$ copies of the original 
graph say $G_1 \cdots G_m$. Associated with each copy is a subset of
terminals $T_i \subseteq T$. We view each pair $(G_i,T_i)$ as an instance 
of the the element connectivity problem in the following way. For each pair of 
vertices $s,t \in T_i$ the new connectivity requirement is the same as the original one. 
The connectivity requirement for all other pairs is $0$. Note that for each copy the
cost of the optimal solution is at most $OPT$. We then use the $2$-approximation algorithm 
for the element connectivity problem by \cite{fjw} for each of the instances of $k$-element
connectivity SNDP. Let $E_i$ be the solution returned by this algorithm for the $i^{th}$ 
instance. The final solution is the union of the solutions for all the instances, i.e. 
$\bigcup_{i \in [m]} E_i$.
Clearly, the cost of this solution is at most $2m$ times $OPT$. The main idea of the algorithm is that
for an appropriate choice of sets $T_i$, we can guarantee that the solution produced above is a 
feasible solution. We call such a choice of sets $T_i$ as a \textit{good family of subsets}.

\begin{definition}[Good Family of Subsets]
Let \textsl{M} be the input collection of source-sink pairs and $T$ is the corresponding
collection of terminals. We say that a family $\left\{T_1, \cdots , T_m \right\}$ of 
subsets of T is \textit{good} iff for each source-sink pair $(s, t) \in M$, 
for each subset of terminals, $X$, of size at most $(k - 1)$, there
is a subset $T_i$, such that $s, t \in T_i$ and $X\cap T_i  = \phi$.
\end{definition}

We now show that the existance of such a family would imply that the output of the above algorithm is a feasible solution to VC-SNDP.

\begin{theorem}[\cite{khanna}] 
\label{main_thm}
Let $\left\{T_1, \cdots, T_m\right\}$ be a good family of subsets. 
Then the output of the above algorithm is a feasible solution to the VC-SNDP instance.
\end{theorem}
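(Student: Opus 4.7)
The plan is to prove contrapositively, or equivalently by contradiction: assume the output $E^* = \bigcup_{i \in [m]} E_i$ is not a feasible solution to VC-SNDP, and derive a contradiction with the good family property. By Menger's theorem, infeasibility at some source-sink pair $(s,t) \in M$ with requirement $r(s,t) \leq k$ means there exists a set $X \subseteq V \setminus \{s,t\}$ with $|X| \leq r(s,t) - 1 \leq k - 1$ whose removal disconnects $s$ from $t$ in the graph $(V, E^*)$.

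Next I would invoke the definition of a good family on this very pair $(s,t)$ and cut $X$: since $|X| \leq k-1$, there is some index $i$ with $s, t \in T_i$ and $X \cap T_i = \emptyset$. The key observation is that because no vertex of $X$ lies in $T_i$, every vertex of $X$ is a \emph{non-terminal} with respect to the $i$-th element-connectivity instance, and is therefore a legal ``element'' that the element-connectivity guarantee is allowed to protect against.

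Now I would push the contradiction through the element-connectivity subroutine. The set $E_i$ is the output of the $2$-approximation of~\cite{fjw} applied to the instance $(G_i, T_i)$ in which the pair $(s,t)$ still has requirement $r(s,t)$. Hence in the subgraph $(V, E_i)$, the terminals $s$ and $t$ are $r(s,t)$-element connected: no set of at most $r(s,t)-1$ elements (edges or non-terminals of $T_i$) can separate them. Since $X$ consists entirely of non-terminals of $T_i$ and $|X| \leq r(s,t) - 1$, the graph $(V \setminus X, E_i \setminus \delta(X))$ still contains an $s$-$t$ path. But $E_i \subseteq E^*$, so this same path survives in $(V \setminus X, E^* \setminus \delta(X))$, contradicting the assumption that $X$ separates $s$ from $t$ in $E^*$.

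The argument is essentially a bookkeeping exercise, so there is no real obstacle; the only subtlety worth flagging explicitly is the alignment of the two different notions of ``element'': a vertex counts as an element in the $i$-th instance precisely when it lies outside $T_i$, which is exactly what the defining condition $X \cap T_i = \emptyset$ of a good family delivers. Once this is noted, the element-connectivity guarantee on $E_i$ immediately upgrades to a vertex-connectivity guarantee on $E^*$ for the pair $(s,t)$, completing the proof.
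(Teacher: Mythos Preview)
Your argument is essentially the same as the paper's: pick a bad pair $(s,t)$ and a small vertex cut $X$, use the good-family property to find an index $i$ with $s,t\in T_i$ and $X\cap T_i=\emptyset$, observe that $X$ then consists only of non-terminals for the $i$-th element-connectivity instance, and conclude from the feasibility of $E_i$ that $s$ and $t$ survive the removal of $X$. The paper phrases this as a direct verification rather than a contradiction via Menger, but the content is identical.

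One small technical slip worth fixing: the good-family definition quantifies only over subsets $X$ of \emph{terminals}, whereas your cut $X\subseteq V\setminus\{s,t\}$ may contain non-terminals. The paper handles this by first setting $X'=X\cap T$, applying the good-family property to $X'$ to obtain $T_i$ with $X'\cap T_i=\emptyset$, and then noting that $T_i\subseteq T$ forces $X\cap T_i=(X\cap T)\cap T_i=X'\cap T_i=\emptyset$ as well. With that one-line insertion your proof is complete.
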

\begin{proof}
Let $(s, t)  \in M$ be any source-sink pair, and let $X \subseteq V / \left\{s, t\right\}$ be any collection
of at most $(r(s, t)-1) \leq (k -1)$ vertices. It is enough to show that the removal of $X$ from
the graph induced by the set of edges returned by the above algorithm does not 
separate $s$ from $t$. Let $X' = X \cap T$. Since $\left\{T_1, \cdots , T_m\right\}$ is a good 
family of subsets, there is some $T_i$ such that $s, t  \in T_i$ while $T_i \cap X  = \phi$. Recall
that set $E_i$ of edges defines a feasible solution to the element-connectivity SNDP instance
corresponding to $T_i$. Then $X$ is a set of non-terminal vertices with respect to $T_i$. Since $s$
is $r(s, t)$-element connected to $t$ in the graph induced by $E_i$, the removal of $X$ from the
graph does not disconnect $s$ from $t$.
\end{proof}

\cite{khanna} gives a randomized algorithm to construct such a good family of subsets of size $O(k^3 \log n)$. 
By theorem \ref{main_thm} this implies an $O(k^3 \log n)$ 
factor algorithm for the VC-SNDP problem. In the following section we present a deterministic algorithm
to construct such a family of size $O(k^3 \log n)$.

\section{Algorithm to Construct Good Family of Subsets}
We may view a good family of subsets as a bipartite graph, $G = (L \cup R, E)$ as follows.
Every vertex $\ell_j \in L$ corresponds to a terminal vertex and each vertex $r_p \in R$ can be identified 
with a subset of terminals $T_p \subseteq T$, where terminal $t_j \in T_p$ iff $(\ell_j, r_p) \in E$. 


In light of the representation above the definition of a good family of subsets may be 
restated as follows. 

\begin{property}[Weak Goodness]
\label{goodness_property}
A graph $G$ satisfies the \textit{weak goodness property} if for every $\ell_i, \ell_j \in L$ and 
$X \subseteq L$ such that $|X| < k$ and $\left\{ \ell_i,\ell_j 
\right\} \notin X$, $N(\left\{\ell_i\right\})\cap N(\left\{\ell_j\right\})$ is not contained in  $N(X)$, 
where $N(S)$ denotes the set of neighbours of $S\subseteq L$.
\end{property}

Our task is to build a graph $G$ which satisfies property \ref{goodness_property} and has $|R|$ as 
small as possible. The main difficulty with the property \ref{goodness_property} is that it has exponentially many
constraints and cannot be checked efficiently. We use an alternate, stronger definition of good family of subsets.

\begin{property}[Strong Goodness]
\label{strong_goodness}
A graph $G = (L \cup R, E)$ satisfies the $(\alpha, \beta)-strong \ goodness$ property if 
\begin{enumerate}
\item for all distinct $\ell_i, \ell_j \in L$,  $|N(\left\{\ell_i\right\})\cap N(\left\{\ell_j\right\})| \geq \alpha$
\item for all distinct $\ell_i, \ell_j, \ell_t \in L$,  $|N(\left\{\ell_i\right\})\cap N(\left\{\ell_j\right\}) \cap N(\left\{\ell_t\right\})| \leq  \beta$
\end{enumerate}
\end{property}

The following observation follows easily from the definitions above.

\begin{observation}
\label{link_properties}
If a graph $G$ satisfies the \textit{$(\alpha,\beta)$-strong goodness} property for $\alpha / \beta  > k$,
 then it satisfies the weak goodness property.
\end{observation}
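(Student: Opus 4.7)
The plan is to prove the contrapositive statement directly via a union bound on neighbourhood intersections. Fix any distinct $\ell_i, \ell_j \in L$ and any $X \subseteq L$ with $|X| < k$ and $\ell_i, \ell_j \notin X$. Let $S = N(\{\ell_i\}) \cap N(\{\ell_j\})$. Our goal is to exhibit a vertex of $S$ that lies outside $N(X)$, which will establish the weak goodness condition.

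First I would invoke part (1) of the $(\alpha, \beta)$-strong goodness property to obtain the lower bound $|S| \geq \alpha$. Next I would rewrite $N(X)$ as the union $\bigcup_{\ell_t \in X} N(\{\ell_t\})$, which upon intersecting with $S$ yields
\[
S \cap N(X) \;=\; \bigcup_{\ell_t \in X} \bigl( N(\{\ell_i\}) \cap N(\{\ell_j\}) \cap N(\{\ell_t\}) \bigr).
\]
Since $\ell_t \notin \{\ell_i, \ell_j\}$ for every $\ell_t \in X$, each term in the union is a triple intersection over three distinct vertices of $L$, and hence has size at most $\beta$ by part (2) of strong goodness.

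A union bound combined with $|X| \leq k-1$ then gives $|S \cap N(X)| \leq (k-1)\beta < k\beta < \alpha \leq |S|$, where the final strict inequality uses the hypothesis $\alpha/\beta > k$. Therefore $S \not\subseteq N(X)$, which is exactly the weak goodness property.

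I do not expect any real obstacle here; the argument is essentially a two-line counting estimate once one recognizes that $N(X)$ is a union of singleton neighbourhoods and that property (2) directly bounds the contribution of each term. The only bookkeeping point to check is the distinctness requirement in property (2), which is guaranteed by the hypothesis $\{\ell_i, \ell_j\} \cap X = \emptyset$.
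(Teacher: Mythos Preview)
Your argument is correct and is precisely the straightforward counting estimate the paper has in mind; the paper does not even spell out a proof, merely remarking that the observation ``follows easily from the definitions above.'' Your union-bound computation and your check that the distinctness hypothesis of property (2) is met are exactly what is needed.
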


The main advantage of the definition for strong goodness is that it has polynomially many constraints, which 
allows us to efficiently ascertain whether a given family of subsets(represented by a bipartite graph $G$) 
is a good family or not. This was not possible in the previous definition.
Now we describe a randomized algorithm to construct a graph 
$G$ such that it satisfies the \textit{$(\alpha,\beta)$-strong goodness} property
for $\alpha / \beta > k$. We will subsequently derandomize this algorithm. 
The variables $\alpha,\beta$ will be determined later.

\subsection{Algorithm for Constructing Graphs with Strong Goodness Property}
\label{algo_section}
Our algorithm constructs a graph $G=(L \cup R, E)$, for $|L| = n$ by assigning a string $w_i \in \cal A^{\gamma}$
to every $\ell_i \in L$. There are $|\cal A|\cdot\gamma$ vertices in $R$
which are indexed by $[\gamma] \times \cal A$. There is an edge connecting $\ell_i$ to $(j,c)$ iff 
$w_i[j] = c$, i.e. the $j^{th}$ character of $w_i$ is $c$. Property \ref{strong_goodness} can be restated as follows,

\begin{property}
\label{strong_goodness_restate}
A graph $G$ satisfies the \textit{$(\alpha, \beta)$-strong goodness} property if 
\begin{enumerate}
\item for all distinct $\ell_i, \ell_j \in L$,  $|w_i \odot w_j| \geq \alpha$
\item for all distinct $\ell_i, \ell_j, \ell_t \in L$,  $|w_i \odot w_j \odot w_t| \leq \beta$
\end{enumerate}
Where $|s_1 \odot s_2|$ is the number of places where the strings $s_1$ and $s_2$ agree.
\end{property}

Our algorithm iteratively builds the set of labels for vertices in $L$ by solving a linear program and
rounding its output using randomized rounding. At the end of each iteration we maintain the invariant that the
set of labelled vertices satisfy property \ref{strong_goodness_restate}. Next, we describe the algorithm 
formally, and subsequently find the minimum value of $\gamma$ for which this algorithm will work. 

Let $S_r = \left\{s_1, s_2 \ldots s_r \right\}$ be the set of labels after $r$ iterations. 
Define $S_2 = \left\{\mu, \nu \right\}$ where $\mu^j_c = 1$ for all $j \in \gamma$ for some fixed $c \in {\cal A}$ and 
$\nu = \left[ \left[ c_1 c_2 \cdots c_{|\cal A|} \right]  \cdots \left[ c_1 c_2 \cdots c_{|\cal A|} \right] \cdots \gamma/|{\cal A}| \ times \right]$. In each subsequent iteration we augment 
this set by another label by solving the following integer program.

\begin{eqnarray} 
    (IP1) \hspace{3 cm} \sum_{c \in \cal A} x^j_c &=& 1 \hspace{4.8 cm} \forall j \in [\gamma] \\ 
    \sum_{j \in [\gamma]} \sum_{c \in \cal A} x^j_c \cdot f(s_p,j,c) &\geq& 2\alpha  \hspace{4.6 cm} \forall p \in S_r \\
    \sum_{j \in [\gamma]} \sum_{c \in \cal A} x^j_c \cdot g(s_p, s_q,j, c) &\leq& 2\beta/3  \hspace{4.2 cm} \forall p,q\in S_r\\
    x^j_c &\in& \left\{ 0,1\right\} \hspace{4.0 cm} \forall j \in [\gamma], c \in \cal A
\end{eqnarray}

Here $f(s,j,c)$ returns $1$ if the $j^{th}$ character of the string $s$ is $c$ and it returns zero otherwise. Similarly $g(s_1,s_2, i,c)$ returns $1$
if the $j^{th}$ character of both strings $s_1$ and $s_2$ is $c$, and returns zero otherwise. We interpret the above 
program as follows. If the variable $x^j_c$ is set to $1$ it implies that the $j^{th}$ character of the new string($s_{r+1}$) is $c$. 
The second and third set of constraints encode both the conditions of property \ref{strong_goodness_restate}. By relaxing the 
integrality constraints$(4)$ we obtain a LP relaxation for this problem, LP1 which can be solved in polynomial time to get a 
fractional solution $\bar{x}$.

We round $\bar{x}$ to get the string $s_{r+1}$ by rounding each index, $j$, independently at random treating the 
values of $\bar{x}^j_c$(for a fixed $j$) as the probability of setting the $j^{th}$ character of $s_{r+1}$ to $c$ 
i.e. $s_{r+1}[j]$ is set to $c$ with probability $\bar{x}^j_c$. 

There two possible ways by which this algorithm can fail. The algorithm will fail if IP1 does not have a valid solution, or
if the solution produced after rounding does not satisfy IP1. In lemma \ref{gamma_theorem} and \ref{prob_lemma} we argue that
both these cases happen with very low probability. 

For the rest of the proof we set values for the variables $\alpha = \left\lceil \gamma / |{\cal A}|\right\rceil, \beta = \left\lceil\gamma / {|{\cal A}|}^{2}\right\rceil$ and let $|{\cal A}| = ck$, for some large constant $c$. In the lemma \ref{gamma_theorem} we will derive a bound on the value of $\gamma$ so that IP1 has
a feasible solution for the values of $\alpha,\beta,\delta$ and $|{\cal A}|$ mentioned above. 

\begin{lemma}
\label{gamma_theorem}
IP1 has a solution in the $n^{th}$ (hence all previous iterations), if $\gamma = \Omega(k^2\log n)$.
\end{lemma}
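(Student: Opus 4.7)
The plan is to apply the probabilistic method: exhibit a random product distribution on $\mathcal{A}^\gamma$, use concentration to show that a sample from it satisfies every IP1 constraint with positive probability, and conclude that a feasible $s_{r+1}$ must exist.

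First, I would design, for each coordinate $j \in [\gamma]$, a distribution $\pi^j$ over $\mathcal{A}$ so that when $s_{r+1}[j]$ is drawn independently from $\pi^j$, the expected agreements satisfy $\mathbb{E}[\,|s_{r+1} \odot s_p|\,] \geq 2\alpha$ and $\mathbb{E}[\,|s_{r+1} \odot s_p \odot s_q|\,] \leq 2\beta/3$ for every $s_p, s_q \in S_r$. Exhibiting such a $\pi$ is exactly the task of producing a fractional feasible solution to LP1, and I would do it by biasing the uniform distribution on $\mathcal{A}$ toward the empirical distribution of characters in $S_r$ at each coordinate, invoking the $(\alpha,\beta)$-strong goodness invariant on $S_r$ to simultaneously bound the triple-wise contribution.

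Next, I would apply Chernoff--Hoeffding bounds. Both $|s_{r+1} \odot s_p|$ and $|s_{r+1} \odot s_p \odot s_q|$ are sums of $\gamma$ independent Bernoulli indicators, so a constant-factor deviation from the expectation occurs with probability at most $\exp(-\Omega(\beta)) = \exp(-\Omega(\gamma/k^2))$. Union bounding over the $|S_r| \leq n$ pairwise constraints and the $O(n^2)$ triple-wise constraints gives a total failure probability of at most $O(n^2)\exp(-\Omega(\gamma/k^2))$, which drops strictly below $1$ once $\gamma = c' k^2 \log n$ for a sufficiently large constant $c'$. Hence with positive probability some integral assignment to $s_{r+1}$ is feasible, yielding the claimed IP1 solution.

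The main obstacle is the first step: exhibiting a $\pi^j$ that clears the stringent pairwise threshold $2\alpha$. Under the uniform distribution the expected pairwise agreement is only $\gamma/|\mathcal{A}| = \alpha$, so $\pi^j$ must bias nontrivially toward the characters most frequent at position $j$ among the strings of $S_r$. The delicate point is that any such bias also tends to inflate triple-wise agreements, and the strong goodness invariant---which caps all triple agreements within $S_r$ at $\beta$---is the key structural lever that keeps the triple expectation below $2\beta/3$ while the bias is large enough to push the pairwise expectation above $2\alpha$.
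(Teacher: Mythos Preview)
Your plan differs substantially from the paper's. The paper does not build any biased product distribution or appeal to an LP solution; it runs a single counting argument over $\mathcal{A}^{\gamma}$. It upper-bounds the number $\Delta_1$ of labels that violate some pairwise constraint of IP1 and the number $\Delta_2$ of labels that violate only some triple constraint, maximises each expression analytically (setting $\partial/\partial d = 0$, then Stirling), and shows $\Delta_1+\Delta_2<|\mathcal{A}|^{\gamma}$ once $\gamma=\Omega(|\mathcal{A}|^{2}\log n)=\Omega(k^{2}\log n)$. In effect the paper is running the probabilistic method with the \emph{uniform} distribution on $\mathcal{A}^{\gamma}$, in one shot, and never separates ``find a fractional point'' from ``round it''.

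Your proposal has a real gap precisely at the point you yourself flag as the ``main obstacle'': you never construct the biased distributions $\pi^{j}$. The $(\alpha,\beta)$-strong-goodness invariant on $S_r$ bounds pairwise agreements from below and triple agreements from above \emph{among the already chosen labels}; it carries no information about the per-coordinate character frequencies in $S_r$, which is what you would need in order to design $\pi^{j}$ so that $\mathbb{E}\bigl[\,|s_{r+1}\odot s_p|\,\bigr]\ge 2\alpha$ for \emph{every} $s_p$ simultaneously. Biasing toward the characters of one $s_p$ can decrease the expected agreement with another $s_p$, and nothing in the invariant rules this tension out. So the step ``exhibit a fractional feasible solution to LP1'' is asserted rather than carried out.

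There is a second, independent issue. Even granting a fractional LP1 solution on the boundary (expectations exactly $2\alpha$ and $2\beta/3$), Chernoff only gives concentration \emph{around} the mean: the probability that a rounded sample meets the threshold $\ge 2\alpha$ for a fixed $s_p$ is roughly $1/2$, and you cannot union-bound that over $n$ constraints. What your Chernoff-plus-union-bound step actually delivers is agreement $\ge\alpha$ and triple agreement $\le\beta$ with high probability, i.e.\ feasibility of IP2 (Property~\ref{strong_goodness_restate}); in the paper that is Lemma~\ref{prob_lemma}, not Lemma~\ref{gamma_theorem}. To recover an integral IP1 solution by rounding you would need a fractional point with strict slack (say expected pairwise agreement $\ge 3\alpha$), which only deepens the first gap.
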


\begin{proof}
We prove this by looking at the agnostic setting by counting the number of labels that have been 
deemed invalid by the labels that have already been chosen in the previous $n-1$ iterations. As long 
as this number is less than the total number possible labels IP1 would have a solution. 

Let us upper bound the number of labels that violate the second set of constraints for some already chosen 
label. i.e. we wish to count the number of labels that intersect with an already chosen label, say $s \in S_{n-1}$, at less 
than $2\alpha$ indices. The number of labels that agree with $s$ at $d$ places is 
given by ${\gamma \choose d} \cdot \left(|{\cal A}| - 1\right)^{\gamma - d}$, where $d$ can take values upto $2\alpha$.
Since there are $n-1$ possible choices for $s$, at most $\Delta_1 = O\left(n \alpha  
{\gamma \choose d} \left(|{\cal A}| - 1\right)^{\gamma - d} \right)$ labels could have been 
deemed infeasible owing to the violation of the second type of constraints.

Now, let us upper bound the number of labels that violate only the third set of constraints. Let $s_1,s_2 \in S_{n-1}$
be a pair of labels that agree on $d_{12}$ indices. Consider
a string $s_3$ which violates the third type of constraint in IP1 for this pair but does not violate the second set 
of constraints for either of the strings. Suppose $s_1, s_2, s_3$ agree at $d_{123}$ indices. Let the number of indices
where $s_3$ agrees with $s_1$ but not $s_2$ be $d_{13}$, similarly define $d_{23}$ to be the number of indices 
where $s_3$ agree with $s_2$ but not $s_1$. By our assumptions, $d_{12}, d_{123} + d_{13}, d_{123} + d_{23} \geq 2\alpha $  and $d_{123} \leq 2\beta/3$. 

One can check that the total number of labels that have been deemed infeasible by the pair $(s_1,s_2)$, is given by
${{\gamma -d_{12}}\choose{d_{13}}} {{d_{12}}\choose{d_{123}}} {{\gamma -d_{12} -d_{13}}\choose{d_{23}}} \left( |{\cal A}| - 1 \right)^{d_{12} - d_{123}} \left( |{\cal A}|-2 \right)^{\gamma - d_{12} - d_{13} - d_{23}}$. 

Since $d_{123},d_{12},d_{23},d_{13}$ can take values upto $2\alpha$ the number of labels that have been deemed infeasible by $s_1$ and $s_2$ is at most 

\[ O\left( \alpha^4 {{\gamma -d_{12}}\choose{d_{13}}}  {{d_{12}}\choose{d_{123}}} {{\gamma -d_{12} -d_{13}}\choose{d_{23}}} \left( |{\cal A}| - 1 \right)^{d_{12} - d_{123}}  \left( |{\cal A}|-2 \right)^{\gamma - d_{12} - d_{13} - d_{23}} \right)\]

Since there are ${n-1 \choose 2}$ ways to select $s_1$ and $s_2$ the total number of labels that have been deemed infeasible, owing
to the violation of only the third type of constraints is \[ \Delta_2 = O\left(n^2 \alpha^4 {{\gamma -d_{12}}\choose{d_{13}}} {{d_{12}}\choose{d_{123}}}  {{\gamma -d_{12} -d_{13}}\choose{d_{23}}}  \left( |{\cal A}| - 1 \right)^{d_{12} - d_{123}}  \left( |{\cal A}|-2 \right)^{\gamma - d_{12} - d_{13} - d_{23}} \right) \]

Now to estimate the maximum value of $\Delta_1 + \Delta_2$ we maximize each of them separately. 
To maximize $\Delta_1$ we calculate the optimal value of $d$ by solving $\frac{\partial \Delta_1}{\partial d} = 0$, i.e. $\frac{\gamma - d}{d} = |{\cal A}| - 1$, which gives $d = \gamma / |{\cal A}|$. Note that this is the expected number of places where $s$ and $s_n$ would have agreed if each character in $s_n$ was chosen uniformly at random from ${\cal A}$.

To maximize $\Delta_2$ we calculate the values of the variables by solving $ \nabla \left[ \log (\Delta_2) \right] =0$ which gives us the following equations.
\begin{eqnarray}
\frac{\gamma - d_{12}-d_{13}}{\gamma - d_{12}} \cdot \frac{d_{12}}{d_{12} - d_{123}} \cdot \frac{\gamma - d_{12}-d_{13} - d_{23}}{\gamma - d_{12} - d_{13}}\cdot\frac{|{\cal A}| -1}{|{\cal A}| -2} &=& 1  \nonumber \\
\frac{\gamma - d_{12} - d_{13} }{d_{13}} \cdot \frac{\gamma - d_{12} - d_{13} - d_{23} }{\gamma - d_{12} - d_{13}} \cdot \frac{1}{|{\cal A}| -2}&=& 1 \nonumber \\
\frac{\gamma - d_{12}-d_{13} - d_{23}}{d_{23}} \cdot \frac{1}{|{\cal A}| -2} &=& 1  \nonumber \\
\frac{d_{12}-d_{123}}{d_{123}} \cdot \frac{1}{|{\cal A}| -1} &=& 1  \nonumber
\end{eqnarray}
Solving for $d_{12}, d_{13}, d_{23}$ and $d_{123}$ we find that $\Delta_2$ is maximized when each of the variables is equal to 
their expected values if $s_n$ was chosen randomly from ${\cal A}^{\gamma}$, i.e. $d_{12} \approx d_{13} \approx d_{23} \approx \gamma/|{\cal A}|$ and $d_{123} \approx \gamma/|{\cal A}|^2$. Substituting the values of the variables and simplifying by sterling's approximation, we get that $\Delta_1 + \Delta_2$ is upper bounded by $O(n^2  |{\cal A}|^{\gamma} {\zeta}^{-\frac{\gamma}{|{\cal A}|^2}})$ for a fixed constant $\zeta > 1$. 

IP1 will have a solution as long as the number of infeasible strings is less than the total number of strings, i.e. 
\begin{eqnarray}
\Delta_1 + \Delta_2 &<& |{\cal A}|^{\gamma} \nonumber \\
O(n^2  |{\cal A}|^{\gamma} {\zeta}^{-\frac{\gamma}{|\cal A|^2}}) &<& |{\cal A}|^{\gamma} \nonumber \\
O\left(|{\cal A}|^2\log n\right) &<& \gamma \nonumber
\end{eqnarray}

Thus by setting $\gamma$ to be $\zeta' |{\cal A}|^2(\log n)$, for some large enough constant $\zeta' > 1$, we can be sure
that IP1 has a solution in each of the first $n$ iterations. 
\end{proof}

The following lemma bounds the probability of the event that the label produced by randomized rounding does not satisfy IP1.
\begin{lemma}
\label{prob_lemma}
The label obtained after rounding satisfies property \ref{strong_goodness_restate} with probability $1 - o(1)$.
\end{lemma}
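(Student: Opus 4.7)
The plan is to show that the randomized rounding step concentrates both sides of each constraint around its expectation, and then union-bound over the polynomially many constraints that the new string $s_{r+1}$ must satisfy. The invariant from previous iterations already guarantees property \ref{strong_goodness_restate} on $S_r$; the only conditions we need to verify with high probability are the ones that involve $s_{r+1}$, namely
\begin{enumerate}
\item $|s_{r+1} \odot s_p| \geq \alpha$ for every $s_p \in S_r$,
\item $|s_{r+1} \odot s_p \odot s_q| \leq \beta$ for every distinct pair $s_p, s_q \in S_r$.
\end{enumerate}

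First I would write each of these quantities as a sum of independent indicator random variables, one per coordinate $j \in [\gamma]$. For fixed $s_p$, the agreement count $X_p := |s_{r+1} \odot s_p|$ is a sum of $\gamma$ independent Bernoulli variables whose expectation equals the LP objective $\sum_{j,c}\bar{x}^j_c f(s_p,j,c) \geq 2\alpha$. Similarly, $Y_{p,q} := |s_{r+1} \odot s_p \odot s_q|$ is a sum of $\gamma$ independent Bernoullis with $\mathbb{E}[Y_{p,q}] = \sum_{j,c} \bar{x}^j_c g(s_p,s_q,j,c) \leq 2\beta/3$. Independence across coordinates holds because each index $j$ is rounded independently.

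Next I would apply Chernoff bounds. The lower-tail form $\Pr[X < (1-\delta)\mu] \leq \exp(-\delta^2 \mu/2)$ with $\delta = 1/2$ gives
\[
\Pr[X_p < \alpha] \;\leq\; \exp(-\alpha/4),
\]
and the upper-tail form $\Pr[Y \geq (1+\delta)\mu] \leq \exp(-\delta^2 \mu/(2+\delta))$ applied to $Y_{p,q}$ (with $(1+\delta)\mathbb{E}[Y_{p,q}] = \beta$, so $\delta \geq 1/2$) gives $\Pr[Y_{p,q} > \beta] \leq \exp(-\Omega(\beta))$; the bound holds uniformly even when $\mathbb{E}[Y_{p,q}]$ is much smaller than $2\beta/3$, because in that regime the multiplicative gap only grows. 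Plugging in the chosen parameters $\alpha = \lceil \gamma/|\mathcal{A}|\rceil = \Theta(k \log n)$ and $\beta = \lceil \gamma/|\mathcal{A}|^2 \rceil = \Theta(\log n)$, each individual failure probability becomes $n^{-\Omega(k)}$ or $n^{-\Omega(1)}$, with the hidden constant controllable through the constant $\zeta'$ in lemma \ref{gamma_theorem}.

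Finally I would take a union bound over the at most $|S_r| \leq n$ constraints of the first type and the at most $\binom{|S_r|}{2} \leq n^2$ constraints of the second type, obtaining a total failure probability of
\[
n\cdot \exp(-\alpha/4)\; +\; n^2 \cdot \exp(-\Omega(\beta)) \;=\; o(1),
\]
provided $\zeta'$ is chosen large enough that the constant inside the upper-tail exponent exceeds $2$. The main obstacle is the second tail bound: because we only know an upper bound on $\mathbb{E}[Y_{p,q}]$ rather than its exact value, one must confirm that the Chernoff estimate degrades gracefully as the mean shrinks, and that $\beta = \Theta(\log n)$ is still large enough, after the $n^2$ union bound, to drive the total probability to $o(1)$. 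Once that is verified, the lemma follows.
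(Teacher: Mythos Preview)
Your proposal is correct and follows essentially the same approach as the paper: write each agreement count as a sum of independent coordinate indicators, apply lower- and upper-tail Chernoff bounds to the constraints coming from LP1, and union bound over the at most $n$ pairwise and $n^2$ triple constraints, choosing $\zeta'$ large enough to make the total failure probability $o(1)$. The only differences are cosmetic constants in the exponents (the paper states $e^{-\alpha/2}$ and $e^{-\beta/6}$ directly) and that you are slightly more explicit about why the upper tail remains valid when $\mathbb{E}[Y_{p,q}]$ is well below $2\beta/3$.
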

\begin{proof}
In order to check if the rounded solution is feasible we have to verify if $s_{p+1}$ satisfies IP1. The 
first set of constraints are trivially satisfied. For a fixed $s_p \in S_r$ the expected number of 
places where $s_p$ agrees with $s_{r+1}$ is lower bounded by $2\alpha$, i.e. $E\left[ |s_p \odot s_{r+1}|\right] \geq 2\alpha$. 
By chernoff bounds\cite{chernoff}, $|s_p \odot s_{r+1}|$ will be less than $\alpha $ with probability at most 
$e^{-\alpha/2}$. By union bound the probability that for at least one $s_p \in S_{n-1}$, $|s_p \odot s_{n}|$ is 
less than $\alpha$ is at most $n e^{-\alpha/2}$.
Similarly at least one of the third set of constraints is violated with probability at most 
$\frac{n^2}{2} e^{-\beta/6}$.  We conclude that the probability of failure
is at most $n e^{-\alpha/2} + \frac{n^2}{2} e^{-\beta/6}$ which is less than $n^2e^{-\beta/6} = n^2e^{-\gamma/{6|{\cal A}|}^2} $. 
Substituting values for the variables we get the failure probability is $\frac{1}{n^{(\zeta'-12)/6 }}$ which is 
$o(1)$ for large enough constant $\zeta'$. 
\end{proof}

Using the algorithm shown above we can construct, in polynomial time, a bipartite graph $G = (L \cup R, E)$, which satisfies $(\alpha,\beta )$-strong goodness property with high probability, such that $|L| = n$, $|R| = \gamma|{\cal A}| = O(k^3 \log n)$. Since $ \alpha / \beta = \frac{\gamma/|{\cal A}|}{\gamma/|{\cal A}|^2} = |{\cal A}| > k$, observation \ref{link_properties} along with theorem \ref{main_thm} yields an $O(k^3 \log n)$ approximate algorithm for VC-SNDP. Next, we present a deterministic version of the above randomized algorithm.

\subsection{Derandomizing the Algorithm}
\label{derand_algo}
Our derandomized algorithm runs in $n$ iterations each consisting of a number of steps. 
As with the randomized algorithm presented earlier, our derandomized algorithm iteratively builds a set of feasible labels that satisfy 
property \ref{strong_goodness_restate}. 
In each iteration our algorithm uses local search to find a 
label to augument the set of feasible labels that were chosen earlier. Starting with an arbitrary(possibly infeasible) label, in
each step of the local search, our algorithm changes one character of the current label based on a potential function, to
move \textit{closer} to a feasible solution. We show that for an appropriately chosen potential function we can reach a feasible
label in polynomially bounded number of steps. 

We use the following potential function to guide our local search. For any label $s \in {\cal A^{\gamma}}$ define 
\[ \phi(s) = \sum_{s_i \in S_r} max\left\{0, \alpha - |s_i \odot s| \right\}  \ + \
\sum_{s_j \in S_r} \sum_{s_i \in S_r, s_i \neq s_j} max\left\{0, |s_i \odot s_j \odot s| - \beta \right\} \]

In every step of the local search we use local operations(changing one character) to move to a label that strictly 
improves that value of the above potential function. In lemma \ref{local_lemma} we show that this is always possible, i.e. one can 
always improve the value of the potential function by just changing a single character. Since $\phi$ is bounded by $n\alpha + {n \choose 2} \beta$
and we achieve a feasible solution once the potential function drops to $0$, this is suffices to 
show that each iteration terminates with a feasible label in polynomial number of steps.

\begin{lemma}
\label{local_lemma}
For every label $\omega \in {\cal A}^{\gamma}$ either $\phi(\omega) = 0$, or there exists another label 
$\omega'$ such that $\phi(\omega')$ is less than $\phi(\omega)$ and $\omega$ and $\omega'$ differ at exactly one index.
\end{lemma}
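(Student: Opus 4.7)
The plan is to prove the contrapositive by contradiction. Assume $\phi(\omega)>0$ yet every label obtained from $\omega$ by changing a single coordinate leaves $\phi$ weakly larger. Since $\phi(\omega)>0$ we are inside some iteration of the algorithm of Section~\ref{algo_section}, so Lemma~\ref{gamma_theorem} supplies a fractional solution $\bar x$ of LP1 for the current set $S_r$. I will use $\bar x$ to define a probability distribution over labels $\omega'$ differing from $\omega$ in at most one coordinate with $E[\phi(\omega')]<\phi(\omega)$, which contradicts the assumption that no such neighbour strictly improves $\phi$.

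The distribution I have in mind is the one suggested directly by the LP: pick $j\in[\gamma]$ uniformly, draw $c\in{\cal A}$ with probability $\bar x^j_c$, and let $\omega'$ be the label obtained from $\omega$ by overwriting its $j$-th coordinate with $c$. Writing $A_i(\omega)=|s_i\odot\omega|$, $B_{pq}(\omega)=|s_p\odot s_q\odot\omega|$, $\Delta A_i = A_i(\omega')-A_i(\omega)$, and $\Delta B_{pq} = B_{pq}(\omega')-B_{pq}(\omega)$, each of $\Delta A_i$ and $\Delta B_{pq}$ lies in $\{-1,0,1\}$, so I can expand each hinge term $(\alpha-A_i)^+$ and $(B_{pq}-\beta)^+$ case by case according to whether the corresponding constraint is strictly violated, tight, or slack at $\omega$. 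On a violated constraint the change of the hinge equals the change of the underlying linear quantity, so the LP inequalities
\[
E[\Delta A_i] \geq \frac{2\alpha-A_i(\omega)}{\gamma}, \qquad
E[\Delta B_{pq}] \leq \frac{2\beta/3-B_{pq}(\omega)}{\gamma}
\]
give an expected decrease of at least $\alpha/\gamma$ per strictly violated first-type constraint and at least $\beta/(3\gamma)$ per strictly violated second-type constraint.

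The main obstacle is the boundary case: when $A_i(\omega)=\alpha$ (or $B_{pq}(\omega)=\beta$) exactly, a single-coordinate change can push the constraint into violation and add up to $+1$ to $\phi(\omega')$, which must be absorbed by the gains above. I would control these boundary contributions by invoking the LP inequalities a second time: at a tight first-type constraint, combining $P(\Delta A_i=+1)-P(\Delta A_i=-1) = E[\Delta A_i] \geq \alpha/\gamma$ with $P(\Delta A_i=+1)+P(\Delta A_i=-1)\leq 1$ yields $P(\Delta A_i=-1) \leq (1-\alpha/\gamma)/2$, and symmetrically $P(\Delta B_{pq}=+1) \leq (1-\beta/(3\gamma))/2$ at a tight second-type constraint. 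The factor-of-two gap between the LP thresholds ($2\alpha$ and $2\beta/3$) and the combinatorial thresholds ($\alpha$ and $\beta$) is precisely what forces the boundary leakage to be smaller than the gains coming from violated constraints. Aggregating the contributions gives $E[\phi(\omega')-\phi(\omega)] \leq -c\,\phi(\omega)/\gamma$ for some positive constant $c$, and since $\phi$ is integer-valued with $\phi(\omega)\geq 1$, this expected change is strictly negative, yielding the desired contradiction.
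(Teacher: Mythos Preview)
Your averaging argument breaks down precisely at the boundary case, and the ``factor-of-two gap'' does not rescue it.  The bounds you derive for a single tight constraint are correct (in fact one can sharpen $P(\Delta A_i=-1)\le\alpha/\gamma$ directly, since the event requires $j$ to land in one of the $A_i(\omega)=\alpha$ positions where $s_i$ and $\omega$ agree), but they are \emph{per-constraint} bounds and nothing in your argument limits the \emph{number} of tight constraints.  Take $\omega$ with exactly one violated constraint, say $A_{i_0}(\omega)=\alpha-1$, and with $A_i(\omega)=\alpha$ for every other $i\in S_r$ and all second-type constraints slack.  Then $\phi(\omega)=1$, the sole gain term is at most $(\alpha+1)/\gamma$, while the aggregated boundary leakage is up to $(r-1)\cdot\alpha/\gamma$; for $r\ge 3$ your upper bound on $E[\phi(\omega')-\phi(\omega)]$ is already positive.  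So the asserted inequality $E[\Delta\phi]\le -c\,\phi(\omega)/\gamma$ is false under the hypotheses you have, and the proposed contradiction does not close.

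The paper avoids this obstacle by arguing geometrically rather than in expectation.  It extends $\phi$ convexly to the product-of-simplices polytope ${\cal P}'$, uses Lemma~\ref{prob_lemma} to exhibit an integral $v^*\in{\cal P}'$ with $\phi(v^*)=0$, and notes that convexity forces the directional derivative at $v_\omega$ toward $v^*$ to be strictly negative; writing $v^*-v_\omega$ as a nonnegative combination of edge directions then yields an edge along which $\phi$ initially decreases.  The key additional ingredient, with no analogue in your argument, is Claim~\ref{plane_claim}: the hyperplanes $x\cdot x_{s_i}=\alpha$ and $x_{s_p}\cdot x_{s_q}\cdot x=\beta$ never cut the interior of an edge of ${\cal P}'$, so $\phi$ is \emph{affine along every edge}.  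Hence an initial decrease along an edge persists all the way to the neighbouring vertex, giving $\phi(\omega')<\phi(\omega)$ outright.  This integrality fact is exactly what neutralises the tight constraints; your randomised single-coordinate resampling averages over the kink at the boundary and therefore cannot see it.
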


We will now present some definitions that will be useful in proving lemma \ref{local_lemma}. Let $S_r$ be the set of labels 
chosen after the first $r^{th}$ iterations of the algorithm. Consider the following integer program that represents the
set of feasible labels that can be used to augment $S_r$.
\begin{eqnarray} 
    (IP2) \hspace{3 cm} \label{alloc_eqn} \sum_{c \in \cal A} x^j_c &=& 1 \hspace{4.3 cm} \forall j \in [\gamma] \\ 
    \label{alpha_eqn} \sum_{j \in [\gamma]} \sum_{c \in \cal A} x^j_c \cdot f(s_p,j,c) &\geq& \alpha  \hspace{4.2 cm} \forall p \in S_r \\
    \label{beta_eqn} \sum_{j \in [\gamma]} \sum_{c \in \cal A} x^j_c \cdot g(s_p, s_q,j, c) &\leq& \beta  \hspace{4.3 cm} \forall p,q\in S_r\\
    \label{x_eqn} x^j_c &\in& \left\{ 0,1\right\} \hspace{3.5 cm} \forall j \in [\gamma], c \in \cal A
\end{eqnarray}

Let LP2 be the standard linear programming relaxation of IP2 and let ${\cal P}$ represent the polytope of the feasible solutions to LP2.
Let ${\cal P'}$ be the polytope corresponding to the sets of equations (\ref{alloc_eqn}), (\ref{x_eqn}) for LP2. 
Clearly ${\cal P}$ is contained in ${\cal P'}$. By lemma \ref{prob_lemma} proved earlier, ${\cal P}$ is feasible, and infact shares 
a large number of its vertices with ${\cal P'}$.

We will require the following claim to complete the proof of lemma \ref{local_lemma}. 
\begin{claim}
\label{plane_claim}
No plane corresponding to an equation chosen from the sets (\ref{alpha_eqn}), (\ref{beta_eqn})
intersects an edge in ${\cal P'}$.
\end{claim}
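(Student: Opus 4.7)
The plan is to exploit the product-of-simplices structure of ${\cal P'}$. First I would characterize the vertices and edges of ${\cal P'}$. The polytope ${\cal P'}$ is defined only by the simplex constraints $\sum_c x^j_c = 1$ with $x^j_c \geq 0$ (the upper bound $x^j_c \leq 1$ is implied by these), so it factors as the Cartesian product $\prod_{j \in [\gamma]} \Delta_j$, where $\Delta_j$ is the $(|{\cal A}|-1)$-simplex on the variables $\{x^j_c\}_{c \in {\cal A}}$. Its vertices are therefore the integral labels: for each $j$ a unique $c(j)$ with $x^j_{c(j)} = 1$ and all other $x^j_c = 0$. Using the standard fact that edges of a product polytope correspond to moving along an edge of exactly one factor, together with the fact that every two distinct vertices of a simplex are adjacent, the edges of ${\cal P'}$ are precisely the segments between two labels that agree on every index except a single index $j^*$, where they take two distinct characters $c_1, c_2$.

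Next I would parameterize such an edge by $t \in [0,1]$ (with $t=0$ at one endpoint and $t=1$ at the other) and examine the left-hand sides of constraints (\ref{alpha_eqn}) and (\ref{beta_eqn}) along it. Since $f(\cdot, j, c), g(\cdot, \cdot, j, c) \in \{0,1\}$ and the two endpoints of the edge agree outside $j^*$, the value $\sum_{j,c} x^j_c\, f(s_p, j, c)$ along the edge equals $C_p + (1-t)\,f(s_p, j^*, c_1) + t\,f(s_p, j^*, c_2)$ for a nonnegative integer constant $C_p$, and an analogous expression holds for each $g$-sum. In particular, the left-hand side of any constraint from (\ref{alpha_eqn}) or (\ref{beta_eqn}) is an affine function of $t$ whose two endpoint values are integers differing by at most $1$.

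Finally, since $\alpha = \lceil \gamma/|{\cal A}|\rceil$ and $\beta = \lceil \gamma/|{\cal A}|^2\rceil$ are themselves integers, a hyperplane of the form (LHS)$=\alpha$ or (LHS)$=\beta$ can meet the relative interior of such an edge only if the two endpoint values strictly sandwich this integer. But two integers differing by at most $1$ cannot strictly sandwich a third integer, so no such proper intersection occurs, which is the desired conclusion.

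The main obstacle I anticipate is being rigorous about the edge characterization: I must verify that every face of ${\cal P'}$ on which two or more coordinates $j$ vary has dimension at least $2$, so that single-coordinate variation is indeed the only source of $1$-dimensional faces. This is a standard consequence of the product-of-simplices structure but deserves to be stated explicitly. Once that is in hand, the remainder is a short integrality observation that requires no additional machinery from the local search analysis.
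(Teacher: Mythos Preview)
Your proposal is correct and follows essentially the same line as the paper's own proof: both arguments reduce to the observation that along any edge of ${\cal P'}$ the left-hand side of a constraint from (\ref{alpha_eqn}) or (\ref{beta_eqn}) is an affine function of the edge parameter with integer endpoint values differing by at most one, so it cannot equal the integral right-hand side at an interior point. You are more explicit than the paper in justifying the edge characterization via the product-of-simplices structure of ${\cal P'}$, whereas the paper simply asserts that adjacent vertices correspond to labels differing in a single position and then notes that at most one of the two fractional coordinates can carry a nonzero coefficient in the constraint.
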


\begin{proof}
We begin by noting that ${\cal P'}$ is the convex hull of characteristic vectors of labels in ${\cal A}^{\gamma}$ 
and that two vertices in ${\cal P'}$ are connected by an edge iff their corresponding labels differ at exactly one position. 
If possible let there be an inequality, $\hat{e}$, in the 
sets (\ref{alpha_eqn}), (\ref{beta_eqn}) such that the plane corresponding to it intersects an edge of ${\cal P'}$. 
Thus we have two adjacent corner points $v$ and $v'$, such that $\lambda v + (1-\lambda)v'$ causes 
$\hat{e}$ to be satisfied with equality for some positive value of $\lambda \in (0,1)$. Moreover, $\lambda v + (1-\lambda)v'$
is integral except for two variables say $x^j_c$ and $x^j_{c'}$. Notice that both the variables cannot appear in $\hat{e}$. 
Exactly one of these two variables must a have non-zero coefficient(equal to $1$)
in $\hat{e}$. This is because if neither of the them appears in $\hat{e}$ then 
moving from $v$ to $v'$ along the edge connecting them should not change the feasibility of the 
solution. Since exactly one of the fractional variables in $\lambda v + (1-\lambda)v'$ appears in $\hat{e}$, 
and the right hand side of the equation is integral, $\lambda v + (1-\lambda)v'$ cannot 
satisfy $\hat{e}$ with equality. 
\end{proof}

Armed with these definitions we now give the proof of lemma \ref{local_lemma}. 

\begin{proof}[\textbf{Proof of lemma \ref{local_lemma}}]
Throughout this proof, by a slight abuse of notation, we will use $\phi(v_s)$ to denote $\phi(s)$ 
for any label $s \in {\cal A}^{\gamma}$ and its corresponding $v_{s}$.
We extend the domain of $\phi$ to fractional points in the interior of ${\cal P'}$ by defining 
\[ \phi(x) = \sum_{s_i \in S_r} max\left\{0, \alpha - x \cdot x_{s_i} \right\}  + 
\sum_{s_j \in S_r} \sum_{\ s_i \in S_r, s_i \neq s_j} max\left\{0, x_{s_i} \cdot x_{s_j} \cdot x - \beta \right\} \]

Here $x_s$ represents the characteristic vector for label $s$ and for
any tripple of vectors $(x,y,z)$, $x\cdot y \cdot z = \sum_{j \in [\gamma]} x_jy_jz_j$. Let $v_{\omega} \in {\cal P'}$ be the 
vertex corresponding to the label $\omega$ and let 
$N_{\omega}$ be the set of labels that differ with ${\omega}$ at exactly one position. If $v_{\omega} \in {\cal P}$,
then $\phi({\omega}) =0$ and we are done. Let us assume $v_{\omega} \notin {\cal P}$ i.e. $\phi({\omega}) > 0$. 
By lemma \ref{prob_lemma}, ${\cal P}$ must have a feasible integer point which should correspond to a vertex 
of ${\cal P'}$, say $v^*$, so $\phi(v^*) = 0$. 
For $\lambda >0$ define $v_{\lambda} = (1-\lambda) v_{\omega} + \lambda v^*$. As we increase $\lambda$
from $0$ to $1$ we travel along the vector $\overline{v_\omega v^*}$. 
Thus, $\left[\frac{\partial \phi(v_{\lambda})}{ \partial \lambda} \right]_{\lambda = 0}< 0$. 
By convexity of ${\cal P'}$, the 
vector $\overline{v_{\omega}v^*}$ can be written as a convex combination of vectors $\overline{v_{\omega}u}$ 
for $u \in N_{\omega}$. Thus by moving in at least one of the directions $\overline{v_{\omega}u}$ we reduce the 
potential value. By claim \ref{plane_claim}, the potential $\phi(v_\lambda)$ will continue to decrease as 
we move along this direction i.e. it will not happen that potential falls and then rises 
as we move along this edge. Thus at least one vertex in $N_{\omega}$ must have lower potential than $v_\omega$.
\end{proof}

%
%

\section{Deterministic Algorithm for Single-Source VC-SNDP}
In the single-source VC-SNDP we are given a $G = (V,E)$ with a special vertex $s$
called the source, and a subset $T$ of vertices called terminals. Additionally, for each terminal $t \in T$
we are given a connectivity requirement $r(s, t) \leq k$. We are required to select a minimum cost
subset of edges $E'$ such that inthe graph induced by $E'$ every terminal t has $r(s, t)$ vertex-disjoint 
paths to $s$. As before, we create a family $\left\{T_1, \cdots , T_m\right\}$ of subsets of
terminals and also create $m$ copies of the graph $G_1, \cdots, G_m$, and for each copy we
solve the element-connectivity SNDP instance with connectivity requirements induced by
terminals in $T_i$. Let $E_i$ be the 2-approximate solution to instance $G_i$. The final solution is the union of the solutions for all the instances, i.e. $\bigcup_{i \in [m]} E_i$. Clearly, the cost of this solution is at most $2m$ times $OPT$.

\begin{definition}[Good Family of Subsets]
Let \textsl{M} be the input collection of source-sink pairs and $T$ is the corresponding
collection of terminals. We say that a family $\left\{T_1, \cdots , T_m \right\}$ of 
subsets of T is \textit{good} iff for each terminal $t \in T$, 
for each subset of terminals, $X$, not containing $t$, and of size at most $(k - 1)$, there
is a subset $T_i$, such that $t \in T_i$ and $X\cap T_i  = \phi$.
\end{definition}

In the same vein as theorem \ref{main_thm} we now show that the existance of such a family 
would imply that the output of the above algorithm is a feasible solution to VC-SNDP.

\begin{theorem}[\cite{khanna}] 
\label{main_thm_2}
Let $\left\{T_1, \cdots, T_m\right\}$ be a good family of subsets. 
Then the output of the above algorithm is a feasible solution to the single-source VC-SNDP instance.
\end{theorem}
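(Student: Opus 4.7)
The plan is to mirror the proof of Theorem \ref{main_thm} almost verbatim, adapted to the single-source setting. Since the only source-sink pairs are of the form $(s,t)$ with $t \in T$, and the source $s$ is implicitly pinned into every instance $(G_i, T_i)$ when the family is constructed, it suffices to verify that for every terminal $t$, any set of at most $r(s,t) - 1 \le k-1$ would-be cut vertices fails to separate $s$ from $t$ in the graph induced by $\bigcup_{i} E_i$.

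First I would fix an arbitrary terminal $t \in T$ and an arbitrary set $X \subseteq V \setminus \{s, t\}$ with $|X| \le r(s,t) - 1 \le k-1$. Restricting to terminals, set $X' = X \cap T$; then $|X'| \le k-1$ and $t \notin X'$. Invoking the single-source good family property on $t$ and $X'$, I obtain an index $i$ such that $t \in T_i$ and $T_i \cap X' = \emptyset$.

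Next I would observe that every vertex of $X$ is a non-terminal with respect to the $i$-th element-connectivity instance: vertices in $X \setminus T$ are non-terminals by definition, and vertices in $X' = X \cap T$ lie outside $T_i$ by the choice of $i$. Since $s \in T_i$ by convention and $t \in T_i$, the pair $(s,t)$ carries the connectivity requirement $r(s,t)$ in instance $i$, and the 2-approximate solution $E_i$ leaves $s$ and $t$ $r(s,t)$-element connected in the subgraph induced by $E_i$. By the definition of element connectivity, removing a set of fewer than $r(s,t)$ non-terminals (hence elements) cannot disconnect $s$ from $t$ in that subgraph, and therefore cannot disconnect them in the larger graph induced by $\bigcup_i E_i$. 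This gives the required $r(s,t)$ vertex-disjoint $s$-$t$ paths.

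The only subtlety worth flagging is the bookkeeping convention that $s \in T_i$ for every $i$, which must be enforced when the family $\{T_1, \ldots, T_m\}$ is built (and which is transparent in the construction of Section 3 once $s$ is treated as a fixed terminal in every copy). Beyond this, I foresee no substantive obstacle, since the argument is structurally identical to Theorem \ref{main_thm}, with the quantifier over source-sink pairs collapsed to a quantifier over terminals $t$.
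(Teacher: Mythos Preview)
Your proposal is correct and follows essentially the same argument as the paper: fix a terminal $t$ and a would-be cut set $X$, intersect with $T$ to get $X'$, pick the $T_i$ guaranteed by the good family, and use that $X$ consists of non-terminals for the $i$-th element-connectivity instance so cannot separate $s$ from $t$ in $E_i$. Your write-up is in fact slightly more careful than the paper's, both in using the tighter bound $|X|\le r(s,t)-1$ and in flagging the bookkeeping convention that $s$ belongs to every $T_i$.
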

\begin{proof}
Let $t \in T$ be a terinal and let $X$ be a subset of $V \ {s, t}$ of size at most $k-1$. 
It is suffices to prove that the removal of $X$ from the graph
induced by $E'$ does not disconnect $s$ and $t$. Let $X' = X \cap T$. Since we start with 
a good family, there is some $T_i$ which contains $t$ and does not intersect with $X'$. 
Let $E_i$ be the solution to the corresponding $k$-element connectivity instance. 
Since vertices of $X$ are non-terminal vertices for the instance $G_i$, their removal from the graph 
induced by $E_i$ does not disconnect $s$ from $t$.
\end{proof}

Now we present a deterministic algorithm
to construct such a family of size $O(k^2 \log n)$ which gives us an $O(k^2 \log n)$
approximate algorithm. 

\section{Algorithm to Construct Good Family of Subsets}
As before we can view a good family of subsets as a bipartite graph, $G = (L \cup R, E)$ and restate the 
definition of a good family of subsets may be restated as follows. 

\begin{property}[Weak Goodness]
\label{goodness_property_2}
A graph $G$ satisfies the \textit{weak goodness property} if for every $\ell_i \in L$ and 
$X \subseteq L$ such that $|X| < k$ and $\ell_i \notin X$, $N(\left\{\ell_i\right\})$ is not contained in  $N(X)$.
\end{property}

We use the following alternate definition of good family of subsets in our algorithm.

\begin{property}[Strong Goodness]
\label{strong_goodness_2}
A graph $G = (L \cup R, E)$ satisfies the $(\alpha, \beta)-strong \ goodness$ property if 
\begin{enumerate}
\item for all $\ell_i \in L$,  $|N(\left\{\ell_i\right\})| = \alpha$
\item for all distinct $\ell_i, \ell_j \in L$,  $|N(\left\{\ell_i\right\})\cap N(\left\{\ell_j\right\})| \leq  \beta$
\end{enumerate}
\end{property}

The following observation follows easily from the definitions above.

\begin{observation}
\label{link_properties_2}
If a graph $G$ satisfies the \textit{$(\alpha,\beta)$-strong goodness} property for $\alpha / \beta  > k$,
 then it satisfies the weak goodness property.
\end{observation}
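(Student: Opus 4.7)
The plan is to prove the contrapositive by a direct covering argument: assume strong goodness with $\alpha/\beta > k$ and a violation of weak goodness, and derive a numerical contradiction from the pairwise intersection bound.

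First, I would suppose for contradiction that weak goodness fails, so there exist $\ell_i \in L$ and $X \subseteq L$ with $\ell_i \notin X$, $|X| \leq k-1$, such that $N(\{\ell_i\}) \subseteq N(X)$. The key observation is then that every neighbor of $\ell_i$ must also be a neighbor of at least one $\ell_j \in X$, which gives the decomposition
\[
N(\{\ell_i\}) \;=\; \bigcup_{\ell_j \in X}\Bigl(N(\{\ell_i\}) \cap N(\{\ell_j\})\Bigr).
\]

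Next, I would take cardinalities and apply the union bound together with the two strong goodness conditions. The first condition gives $|N(\{\ell_i\})| = \alpha$, and since each $\ell_j \in X$ is distinct from $\ell_i$, the second condition gives $|N(\{\ell_i\}) \cap N(\{\ell_j\})| \leq \beta$. Hence
\[
\alpha \;=\; |N(\{\ell_i\})| \;\leq\; \sum_{\ell_j \in X} |N(\{\ell_i\}) \cap N(\{\ell_j\})| \;\leq\; |X|\cdot \beta \;\leq\; (k-1)\beta \;<\; k\beta,
\]
so $\alpha/\beta < k$, contradicting the hypothesis $\alpha/\beta > k$.

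No step looks technically delicate here; the whole argument is essentially a one-line counting inequality once the covering decomposition is written down, and it is structurally the single-source analogue of Observation \ref{link_properties} (the only change being that one of the two indices $\ell_i, \ell_j$ is dropped, so double intersections replace triple ones). The only point worth being careful about is ensuring $\ell_i \notin X$ so that each pair $(\ell_i, \ell_j)$ with $\ell_j \in X$ is genuinely a pair of distinct vertices, as required to invoke condition (2) of strong goodness; this is guaranteed by the statement of weak goodness.
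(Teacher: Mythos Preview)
Your argument is correct and is precisely the natural counting argument the paper has in mind; the paper does not spell out a proof at all, merely remarking that the observation ``follows easily from the definitions above,'' and your covering-plus-union-bound inequality is exactly that easy derivation.
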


Now we describe a deterministic algorithm to construct a graph 
$G$ such that it satisfies the \textit{$(\alpha,\beta)$-strong goodness} property
for $\alpha / \beta > k$. The variables $\alpha,\beta$ will be determined later.

\subsection{Algorithm for Constructing Graphs with Strong Goodness Property}
\label{algo_section_2}
Our algorithm constructs a graph $G=(L \cup R, E)$, for $|L| = n$ by assigning a string $w_i \in \cal A^{\gamma}$
to every $\ell_i \in L$. There are $|\cal A|\gamma$ vertices in $R$
which are indexed by $[\gamma] \times \cal A$. There is an edge connecting $\ell_i$ to $(j,c)$ iff 
$w_i[j] = c$, i.e. the $j^{th}$ character of $w_i$ is $c$. Property \ref{strong_goodness_2} can be restated as follows,

\begin{property}
\label{strong_goodness_restate_2}
A graph $G$ satisfies the \textit{$(\alpha, \beta)$-strong goodness} property if 
\begin{enumerate}
\item for all $\ell_i \in L$,  $|w_i| = \alpha$
\item for all distinct $\ell_i, \ell_j \in L$,  $|w_i \odot w_j | \leq \beta$
\end{enumerate}
\end{property}

In describing our algorithm we will set the variables $\alpha$ and $\beta$ as follows, $\alpha = \gamma$
and $\beta = \gamma/|{\cal A}|$. Our algorithm builds the set of labels by solving the following integer program in every iteration. 
\begin{eqnarray} 
    (IP3) \hspace{3 cm} \sum_{c \in \cal A} x^j_c &=& 1 \hspace{4.8 cm} \forall j \in [\gamma] \\ 
    \sum_{j \in [\gamma]} \sum_{c \in \cal A} x^j_c \cdot f(s_p,j,c) &\leq& 2\beta/3  \hspace{4.2 cm} \forall p \in S_r \\
    x^j_c &\in& \left\{ 0,1\right\} \hspace{4.0 cm} \forall j \in [\gamma], c \in \cal A
\end{eqnarray}

Lemma \ref{gamma_theorem_2} given below establishes the minimum value of $\gamma$ for which the above integer program has a solution in
each of the $n$ iterations. 

\begin{lemma}
\label{gamma_theorem_2}
IP3 has a solution in the $n^{th}$ (hence all previous iterations), if $\gamma = \Omega(k\log n)$.
\end{lemma}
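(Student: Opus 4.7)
The plan is to mimic the proof of Lemma~\ref{gamma_theorem} in the simpler setting of IP3, which contains only the pairwise-overlap family of constraints. Consequently only the analogue of the $\Delta_1$ term from that proof needs to be bounded, and the $\Delta_2$ (triple-overlap) term---which drove the $\Omega(k^2\log n)$ requirement there---disappears entirely, leaving the weaker condition $\gamma = \Omega(k\log n)$.

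Concretely, fix the previously chosen labels $S_r = \{s_1,\dots,s_r\}$ with $r \le n-1$, and count the number $\Delta$ of candidate strings $w \in {\cal A}^{\gamma}$ that would be deemed infeasible for augmenting $S_r$, i.e.\ strings $w$ for which there exists some $s_p \in S_r$ with $|w \odot s_p| > 2\beta/3$. For a single $s_p$, the number of strings that agree with $s_p$ in exactly $d$ positions is $\binom{\gamma}{d}(|{\cal A}|-1)^{\gamma-d}$. Summing over the $O(\beta)$ bad values of $d$ and union-bounding over the $n-1$ choices of $s_p$ gives
\[
\Delta \;\le\; O\!\Big(n\,\beta\,\binom{\gamma}{d^{*}}(|{\cal A}|-1)^{\gamma-d^{*}}\Big),
\]
where $d^{*} = \gamma/|{\cal A}|$ is the maximizing value---obtained exactly as in Lemma~\ref{gamma_theorem} by setting the derivative of the summand with respect to $d$ to zero, and equal to the expected pairwise agreement of two uniformly random strings over ${\cal A}$.

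Substituting $d^{*} = \gamma/|{\cal A}|$ and applying Stirling's approximation should give, in direct analogy with the pairwise half of Lemma~\ref{gamma_theorem}'s final simplification, a bound of the form $\Delta = O(n\,|{\cal A}|^{\gamma}\,\zeta^{-\gamma/|{\cal A}|})$ for some fixed constant $\zeta > 1$. The exponent $\gamma/|{\cal A}|$ (rather than $\gamma/|{\cal A}|^{2}$) reflects the single-character coincidence involved in the pairwise constraint, as opposed to the two-character coincidence that produced the tighter damping in Lemma~\ref{gamma_theorem}. For IP3 to admit a feasible solution in the $n$-th (hence in every earlier) iteration it suffices that $\Delta < |{\cal A}|^{\gamma}$, i.e.\ $n < \zeta^{\gamma/|{\cal A}|}$, so $\gamma > |{\cal A}|\log n / \log \zeta$. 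With $|{\cal A}| = \Theta(k)$ this is exactly $\gamma = \Omega(k\log n)$, which is the claim.

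The main obstacle is the Stirling step: one has to verify that the exponential damping factor $\zeta^{-\gamma/|{\cal A}|}$ genuinely survives once the binomial coefficient is expanded at $d^{*} = \gamma/|{\cal A}|$, and that the polynomial factors---an $O(\beta)$ from enumerating bad values of $d$, the union-bound factor $O(n)$ over $s_p$, and the $O(1/\sqrt{\beta})$ Stirling prefactor---do not erode the gain. This calculation is the direct analogue of the pairwise portion of Lemma~\ref{gamma_theorem}'s analysis and is free of the more delicate multi-index bookkeeping that the triple constraint forced there.
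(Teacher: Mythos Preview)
The paper omits its own proof of this lemma, so there is nothing to compare against directly; I can only evaluate your argument on its merits, and it has a genuine gap.

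The constraint in IP3 is violated precisely when $|w \odot s_p| > 2\beta/3$. With $\beta = \gamma/|{\cal A}|$, the threshold $2\beta/3 = 2\gamma/(3|{\cal A}|)$ lies \emph{below} the mode $d^{*} = \gamma/|{\cal A}|$ of the count $\binom{\gamma}{d}(|{\cal A}|-1)^{\gamma-d}$. In other words, the maximizer $d^{*}$ you identify sits \emph{inside} the infeasible region, not on its boundary or outside it. Hence the number of strings violating the constraint for a single $s_p$ is not exponentially damped relative to $|{\cal A}|^{\gamma}$: a uniformly random string has expected agreement $\gamma/|{\cal A}| = \beta > 2\beta/3$ with $s_p$, so it violates the constraint with probability bounded away from zero (indeed close to $1$ by the lower Chernoff tail). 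The factor $\zeta^{-\gamma/|{\cal A}|}$ you assert simply does not materialize from the Stirling expansion at $d^{*}=\gamma/|{\cal A}|$; at that point $\binom{\gamma}{d^{*}}(|{\cal A}|-1)^{\gamma-d^{*}}$ is of order $|{\cal A}|^{\gamma}/\sqrt{\beta}$, not $|{\cal A}|^{\gamma}\zeta^{-\gamma/|{\cal A}|}$. After the union bound over the $n-1$ previously chosen labels you obtain $\Delta \gtrsim n\,|{\cal A}|^{\gamma}$, which is far larger than $|{\cal A}|^{\gamma}$, and the counting argument collapses. (Your remark that there are ``$O(\beta)$ bad values of $d$'' is also off: the bad range is $d \in (2\beta/3,\gamma]$, of length roughly $\gamma$, not $\beta$.)

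Note that the template you are importing from Lemma~\ref{gamma_theorem} has the same feature: there the $\Delta_1$ infeasible region is $d < 2\alpha$ and the mode $d^{*}=\gamma/|{\cal A}|=\alpha$ again lies inside it, so that argument does not yield an exponentially small $\Delta_1$ either. Whatever the intended proof is, it cannot be the naive ``count bad strings and compare to $|{\cal A}|^{\gamma}$'' argument with the constants $2\beta/3$ (resp.\ $2\alpha$) as written; one needs either a threshold on the far side of the mode or a genuinely different feasibility argument.
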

\begin{proof}
Omitted.
\end{proof}

Let LP3 be the linear programming relaxation for IP3 and let $\bar{x}$ be a (fractional)feasible solution to LP3.
We can round $\bar{x}$ to get an integer solution by 
rounding each index, $j$, independently at random treating the 
values of $\bar{x}^j_c$(for a fixed $j$) as the probability of setting the $j^{th}$ character of $s_{r+1}$ to $c$ 
i.e. $s_{r+1}[j]$ is set to $c$ with probability $\bar{x}^j_c$. 


The following lemma bounds the probability of the event that the label produced by randomized rounding does not satisfy property \ref{strong_goodness_restate_2}.
\begin{lemma}
\label{prob_lemma_2}
The label obtained after rounding satisfies property \ref{strong_goodness_restate_2} with probability $1 - o(1)$.
\end{lemma}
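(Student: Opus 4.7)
The plan is to mirror the proof of Lemma \ref{prob_lemma}, but in the single-source setting only the upper-bound constraints survive, so the argument becomes a clean Chernoff-plus-union-bound calculation. First I would observe that the first clause of property \ref{strong_goodness_restate_2} is automatic: since $s_{r+1}$ is obtained by choosing exactly one character per coordinate (the first constraint of IP3 is preserved pointwise by independent rounding), every label $w$ satisfies $|w| = \gamma = \alpha$ trivially. Thus the only event to control is whether any of the inequalities $|s_p \odot s_{r+1}| \leq \beta$ is violated for some $s_p \in S_r$.

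Next I would fix $s_p \in S_r$ and write $|s_p \odot s_{r+1}| = \sum_{j \in [\gamma]} Y_j$, where $Y_j$ is the indicator that the rounded character at position $j$ equals $s_p[j]$. The $Y_j$ are independent across $j$ because rounding is done independently at each index, and
\[
E[Y_j] = \bar{x}^{\,j}_{s_p[j]} = \sum_{c \in \cal A} \bar{x}^{\,j}_c f(s_p,j,c),
\]
so by feasibility of $\bar{x}$ in LP3 the expectation satisfies $E[|s_p \odot s_{r+1}|] \leq 2\beta/3$. Applying the multiplicative Chernoff bound with deviation factor $\delta = 1/2$ (so that $(1+\delta)\cdot 2\beta/3 = \beta$) gives
\[
\Pr\!\left[\,|s_p \odot s_{r+1}| > \beta\,\right] \;\leq\; \exp\!\left(-\tfrac{1}{12}\cdot \tfrac{2\beta}{3}\right) \;=\; e^{-\beta/18}.
\]

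Then I would take a union bound over the at most $n$ previously chosen labels $s_p \in S_r$, obtaining a total failure probability of at most $n\cdot e^{-\beta/18}$. Substituting the parameter choices $\beta = \gamma/|{\cal A}|$, $|{\cal A}| = \Theta(k)$, and $\gamma = \Omega(k \log n)$ from Lemma \ref{gamma_theorem_2} yields $\beta = \Omega(\log n)$ with a hidden constant we can make as large as we want by choosing the constant inside $\gamma$ appropriately. Hence the failure probability is bounded by $n \cdot n^{-C}$ for a constant $C > 1$, which is $o(1)$.

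The only mildly delicate step is the Chernoff application: unlike Lemma \ref{prob_lemma} where we needed both lower and upper tail bounds, here a single upper-tail bound suffices, and the slack factor of $3/2$ between the LP bound $2\beta/3$ and the integrality threshold $\beta$ is exactly what makes the exponent $\Theta(\beta)$ rather than something weaker. No new ideas beyond this are needed; the rest is mechanical substitution of the parameters.
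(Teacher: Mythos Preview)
Your proposal is correct and matches exactly what the paper intends: the paper's own proof is simply ``Omitted,'' with the implicit understanding that it is the analogue of Lemma~\ref{prob_lemma} restricted to the upper-tail constraints, which is precisely the Chernoff-plus-union-bound argument you give. The one minor point worth noting is that the Chernoff bound is applied with $E[|s_p \odot s_{r+1}|] \leq 2\beta/3$ rather than equality, but the standard upper-tail form $\Pr[X \geq (1+\delta)\mu'] \leq e^{-\mu'\delta^2/3}$ is valid for any $\mu' \geq E[X]$, so the step goes through as written.
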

\begin{proof}
Omitted.
\end{proof}

Using the algorithm shown above we can construct, in polynomial time, a bipartite graph $G = (L \cup R, E)$, which satisfies $(\alpha,\beta )$-strong goodness property with high probability, such that $|L| = n$, $|R| = \gamma|{\cal A}| = O(k^2 \log n)$. Since $ \alpha / \beta = |{\cal A}| > k$, observation \ref{link_properties_2} along with theorem \ref{main_thm_2} yields an $O(k^2 \log n)$ approximate algorithm for VC-SNDP. 

\subsection{Derandomizing the Algorithm}
The derandomized algorithm is similar to the one presented earlier in section \ref{derand_algo}. The algorithm 
proceeds in several iterations each consisting of multiple steps. In each iteration it 
augments the set of feasible labels. It finds such a feasible solution by using local seach starting 
from an arbitrary label. In every step of the local search it reduces a 
polynomially bounded potential function until it falls to
zero, in which case we obtain a label satisfying property \ref{strong_goodness_restate_2}.

We use the following potential function to guide our local search. For any label $s \in {\cal A^{\gamma}}$ define,
\[ \phi(s) = \sum_{s_i \in S_r} max\left\{0, |s_i \odot s| - \beta \right\} \]

In lemma \ref{local_lemma_2} we show that one can 
always improve the value of the potential function by just changing a single character of the current label. 
Since $\phi$ is bounded by $n\beta$
and we achieve a feasible solution once the potential function drops to $0$, this is suffices to 
show that each iteration terminates with a feasible label in polynomial number of steps.

\begin{lemma}
\label{local_lemma_2}
For every label $\omega \in {\cal A}^{\gamma}$ either $\phi(\omega) = 0$, or there exists another label 
$\omega'$ such that $\phi(\omega')$ is less than $\phi(\omega)$ and $\omega$ and $\omega'$ differ at exactly one index.
\end{lemma}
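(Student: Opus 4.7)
The plan is to follow the proof of Lemma \ref{local_lemma} essentially line for line, with the simpler single-constraint potential suited to property \ref{strong_goodness_restate_2}. Let ${\cal P}$ be the polytope defined by the allocation equations, the box constraints, and the upper-bound constraints $\sum_{j,c} x^j_c f(s_p, j, c) \leq \beta$ for every $s_p \in S_r$; its integer points are exactly the labels admissible as $s_{r+1}$. Let ${\cal P}'$ be the polytope obtained by dropping the $\beta$-constraints. Then ${\cal P}'$ is the convex hull of the characteristic vectors $x_s$ of all labels $s \in {\cal A}^\gamma$, with edges joining pairs of vertices whose labels differ at exactly one position. Extend $\phi$ affinely to the interior of ${\cal P}'$ by
\[
  \phi(x) \;=\; \sum_{s_i \in S_r} \max\{0,\ x \cdot x_{s_i} - \beta\},
\]
where $x \cdot y := \sum_{j,c} x^j_c y^j_c$; this extension is convex and agrees with $\phi$ on vertices.

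The first substantive step is the direct analogue of Claim \ref{plane_claim}: no hyperplane $\sum_{j,c} x^j_c f(s_p, j, c) = \beta$ cuts the relative interior of an edge of ${\cal P}'$. The argument copies Claim \ref{plane_claim}: two adjacent vertices of ${\cal P}'$ differ only in coordinates $x^j_c$ and $x^j_{c'}$ for a single position $j$ and distinct $c, c'$, and at most one of $f(s_p, j, c)$, $f(s_p, j, c')$ equals $1$ since $s_p[j]$ is a single character; hence along the segment the LHS has the form $A \pm \lambda$ for some integer $A$, which cannot equal the integer $\beta$ for any $\lambda \in (0,1)$. Integrality of $\beta = \gamma/|{\cal A}|$ is ensured by requiring $\gamma$ to be a multiple of $|{\cal A}|$, which does not alter the $\Omega(k \log n)$ bound of Lemma \ref{gamma_theorem_2}.

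Finally, suppose $\phi(\omega) > 0$ and let $v_\omega$ be its vertex. By Lemma \ref{gamma_theorem_2}, the polytope ${\cal P}$ contains an integer point; let $v^*$ be the corresponding vertex, so $\phi(v^*) = 0$. Convexity of $\phi$ together with $\phi(v^*) < \phi(v_\omega)$ yields a strictly negative directional derivative of $\phi$ at $v_\omega$ in the direction $\overline{v_\omega v^*}$. Decompose this vector as a convex combination of the edge directions $\overline{v_\omega u}$ for $u \in N_\omega$; by sublinearity of the directional derivative of a convex function, at least one such edge direction has strictly negative derivative at $v_\omega$. By the plane-edge claim, no summand of $\phi$ switches its linear piece along the whole of this edge, so $\phi$ is affine and strictly decreasing all the way to $u$. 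The label corresponding to $u$ is the desired $\omega'$, differing from $\omega$ at exactly one position.

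The main technical obstacle is the plane-edge claim, and within it the need to know that the right-hand side of every $\beta$-constraint is an integer, so that a single-coordinate-swap edge cannot be met with equality in its interior. This is handled by the mild divisibility condition on $\gamma$ noted above, which leaves the $O(k^2 \log n)$ overall bound unchanged.
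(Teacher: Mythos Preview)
Your proof is correct and follows essentially the same approach as the paper, which simply says ``Similar to the proof for lemma~\ref{local_lemma}''; you have carried out that adaptation explicitly and carefully. The only cosmetic difference is that you invoke Lemma~\ref{gamma_theorem_2} (feasibility of IP3, hence of the weaker $\leq\beta$ system) to produce the integer point $v^*$, whereas the paper's template proof of Lemma~\ref{local_lemma} cites the rounding lemma; either citation works, and your added remark about choosing $\gamma$ divisible by $|{\cal A}|$ so that $\beta$ is integral patches a detail the paper leaves implicit.
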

\begin{proof}
Similar to the proof for lemma \ref{local_lemma}.
\end{proof}


%
%
%

\section*{Acknowlegement}
I would like to thank Vijay Vazirani for his valuable guidance and advice, and for being a sounding board for
my ideas. I would also like to thank Lei Wang for his valuable inputs and suggestions at various stages of this 
paper.

\bibliography{VCSNDP}
\bibliographystyle{plain}

\end{document}